\begin{document}

\title*{Copula-based piecewise regression}

\author{Arturo Erdely}

\institute{Arturo Erdely \at Universidad Nacional Aut\'onoma de M\'exico, Facultad de Estudios Superiores Acatl\'an, \email{arturo.erdely@comunidad.unam.mx}}

\maketitle

\abstract{Most common parametric families of copulas are totally ordered, and in many cases they are also positively or negatively regression dependent and therefore they lead to monotone regression functions, which makes them not suitable for dependence relationships that imply or suggest a non-monotone regression function. A gluing copula approach is proposed to decompose the underlying copula into totally ordered copulas that combined may lead to a non-monotone regression function.}

\section{Introduction}
\label{sec:Intro}

Given a bivariate random vector $(X,Y)$ with joint probability distribution function $F_{X,Y}(x,y)=\mathbb{P}(X\leq x,Y\leq y)$ it is possible to assess uncertainty about one of the random variables conditioning on certain values of the other, for example through the univariate conditional probability distribution of $Y$ given $X=x,$ that is $F_{Y|X}(y\,|\,x)=\mathbb{P}(Y\leq y\,|\,X=x).$ As a point estimate for a future value of $Y$ given $X=x$ we may calculate central tendency measures with $F_{Y|X}$ such as the mean (whenever it exists) or the median (which always exists in the continuous case) which will depend on the conditioning value $x$ and therefore such point estimates depending on $x$ may be denoted by $\mu(x)$ and are called \textit{regression function} for $Y$ given $X=x.$

\medskip

As a consequence of \textit{Sklar's Theorem} \cite{Skl59} for continuous random variables there exists a unique copula $C$ such that the joint probability distribution function $F_{X,Y}(x,y)=C(F_X(x),F_Y(y))$ where $F_X(x)=\mathbb{P}(X\leq x)$ and $F_Y(y)=\mathbb{P}(Y\leq y)$ are the marginal probability distribution functions of $X$ and $Y,$ respectively. As explained in \cite{Nel06}, the conditional distribution of $Y$ given $X=x$ can be obtained by

\begin{equation}
  F_{Y|X}(y\,|\,x)\,=\,\frac{\partial C(u,v)}{\partial u}\Big|_{u\,=\,F_X(x)\,,\,v\,=\,F_Y(y)}
\label{eq:condcdf}
\end{equation}
and therefore to find the median regression function for $Y$ given $X=x$ whenever $F_{Y|X}$ is a continuous distribution function, we proceed as follows:

\medskip

\noindent\textbf{Algorithm 1}
\begin{enumerate}
  \item Set $\partial C(u,v)/\partial u\,=\,1/2\,;$
	\item solve for the regression function of $V=F_Y(Y)$ given $U=F_X(X)=u,$ and obtain $v=\psi(u)\,;$
	\item replace $u$ by $F_X(x)$ and $v$ for $F_Y(y)\,;$
	\item solve for the regression function of $Y$ given $X=x:$
	   \begin{equation}
		   y\,=\,\mu(x)\,=\,F^{(-1)}_Y(\psi(F_X(x))).
		 \label{eq:regcurve}
		 \end{equation}
\end{enumerate}
It is worth to notice that since $F_X$ and $F_Y$ only explain the individual (marginal) probabilistic behavior of the continuous random variables $X$ and $Y,$ respectively, then the information about their dependence for regression purposes is contained in $\psi.$ A survey of copula-based regression models may be found in \cite{KolPai09} and estimation/inference procedures for such purpose in \cite{NohGhoBou13}.

\section{Piecewise monotone regression}
\label{sec:Piecewise}

In \cite{DetHecVol14} it is argued that when the regression function is non-monotone, copula-based regression estimates do not reproduce the qualitative features of the regression function under commonly used parametric copula families. This occurs because very often such parametric copulas lead to monotone regression functions, but in case there is evidence that the underlying regression function is non-monotone a \textit{piecewise regression} approach may be applied in order to break up a non-monotone relationship into a piecewise monotonic one, and then seek for the best copula fit for each piece. 

\medskip

Piecewise (or segmented) monotone regression for $Y$ given $X=x$ is defined by partitioning the support of $X$ into a finite number of intervals such that restricted to each one it is possible to obtain a monotone regression function. For example, instead of (\ref{eq:condcdf}) we may obtain something like
\begin{equation}
  F_{Y|X}(y\,|\,x)\,=\,\begin{cases}
	                     \,\frac{\partial}{\partial u}C_1(u,v)\Big|_{u\,=\,F_{X|X\leq\,b}(x)\,,\,v\,=\,F_Y(y)}, & \text{ if } x\leq b, \\
							  		   \,\frac{\partial}{\partial u}C_2(u,v)\Big|_{u\,=\,F_{X|X>\,b}(x)\,,\,v\,=\,F_Y(y)},& \text{ if } x > b,
									   \end{cases}
\label{eq:condcdf2}
\end{equation}
with $C_1$ and $C_2$ two different copulas, $b$ is called a \textit{break-point} for explanatory variable $X,$ and where $F_{X|X\leq\,b}$ and $F_{X|X>\,b}$ are the conditional distribution functions of $X$ given $X\leq b$ and $X>b,$ respectively. This may be justified in terms of the \textit{gluing copula} technique \cite{SibSto08} as explained in \cite{ErdDia10} for the particular case of vertical section gluing and bivariate copulas. Specifically, given two bivariate copulas $C_1$ and $C_2,$ and a fixed value $0<\theta<1$ (gluing point), we may scale $C_1$ to $[0,\theta]\times[0,1]$ and $C_2$ to $[\theta,1]\times[0,1]$ and \textit{glue} them into a single copula
\begin{equation}
  C_{1,2,\theta}(u,v)\,=\,\begin{cases}
	                          \,\theta C_1(\frac{u}{\theta},v), & 0\leq u\leq\theta,\\  
														\,(1-\theta)C_2(\frac{u-\theta}{1-\theta},v)+\theta v, & \theta\leq u\leq 1.
	                        \end{cases}
\label{eq:gluing}
\end{equation}
Then
\begin{equation}
  \frac{\partial}{\partial u}C_{1,2,\theta}(u,v)\,=\,\begin{cases}
	                                                     \,\frac{\partial}{\partial u}C_1(\frac{u}{\theta},v), & 0\leq u\leq\theta,\\  
														                           \,\frac{\partial}{\partial u}C_2(\frac{u-\theta}{1-\theta},v), & \theta\leq u\leq 1,
	                                                   \end{cases}
\label{eq:gluingdu}
\end{equation}
and by (\ref{eq:condcdf})
\begin{eqnarray}
  F_{Y|X}(y\,|\,x) &=& \frac{\partial C_{1,2,\theta}(u,v)}{\partial u}\bigg|_{u\,=\,F_X(x)\,,\,v\,=\,F_Y(y)} \nonumber \\
	                 &=& \begin{cases}
									       \,\frac{\partial}{\partial u}C_1(\frac{F_X(x)}{\theta},F_Y(y)), & 0\leq F_X(x)\leq\theta,\\  
												 \,\frac{\partial}{\partial u}C_2(\frac{F_X(x)-\theta}{1-\theta},F_Y(y)), & \theta\leq F_X(x)\leq 1,
									     \end{cases} \nonumber \\
				&=& \begin{cases}
							  \,\frac{\partial}{\partial u}C_1(u,v)\Big|_{u\,=\,F_{X|X\leq\,b}(x)\,,\,v\,=\,F_Y(y)}, & x\leq F^{(-1)}_X(\theta)=b, \\ 
							  \,\frac{\partial}{\partial u}C_2(u,v)\Big|_{u\,=\,F_{X|X>\,b}(x)\,,\,v\,=\,F_Y(y)}, & x > F^{(-1)}_X(\theta)=b,
				    \end{cases}
\label{eq:condcdf3}
\end{eqnarray}
since $F_{X|X\leq\,b}(x)=\mathbb{P}(X\leq x\,|\,X\leq b) = \mathbb{P}(X\leq x)/\mathbb{P}(X\leq b) = F_X(x)/\theta$ and\linebreak $F_{X|X>\,b}(x) = \mathbb{P}(b<X\leq x)/\mathbb{P}(X>b) = (F_X(x)-\theta)/(1-\theta).$ The result obtained in (\ref{eq:condcdf3}) leads to a regression function of the form
\begin{equation}
  \mu(x)\,=\,\begin{cases}
	                   \,\mu_1(x), & \text{ if } x\leq b, \\
							       \,\mu_2(x), & \text{ if } x > b,
	                 \end{cases}
\label{eq:regcurve2}
\end{equation}
where, for example, if $\mu_1(x)$ is an increasing function and $\mu_2(x)$ a decreasing one, then $\mu(x)$ is non-monotone.

\begin{example}\label{ex:NelsenEj33}
  From example 3.3 in \cite{Nel06} if a probability mass $0<\theta<1$ is uniformly distributed on the line segment joining $(0,0)$ to $(\theta,1),$ and a probability mass $1-\theta$ is uniformly distributed on the line segment joining $(\theta,1)$ to $(1,0),$ see Fig. \ref{fig:Example1}, the underlying copula for a random vector $(X,Y)$ of continuous Uniform$(0,1)$ random variables with such non-monotone dependence is given by
	\begin{equation}
	  C_{\theta}(u,v)\,=\,\begin{cases}
		                      \,u, & 0\leq u\leq\theta v\leq \theta, \\
													\,\theta v, & 0\leq\theta v<u<1-(1-\theta)v, \\
													\,u+v-1, & \theta\leq 1-(1-\theta)v\leq u\leq 1.
		                    \end{cases}
	\label{eq:NelsenEj33A}
	\end{equation}
	
  \begin{figure}[t]
  \sidecaption
  \includegraphics[scale=.30]{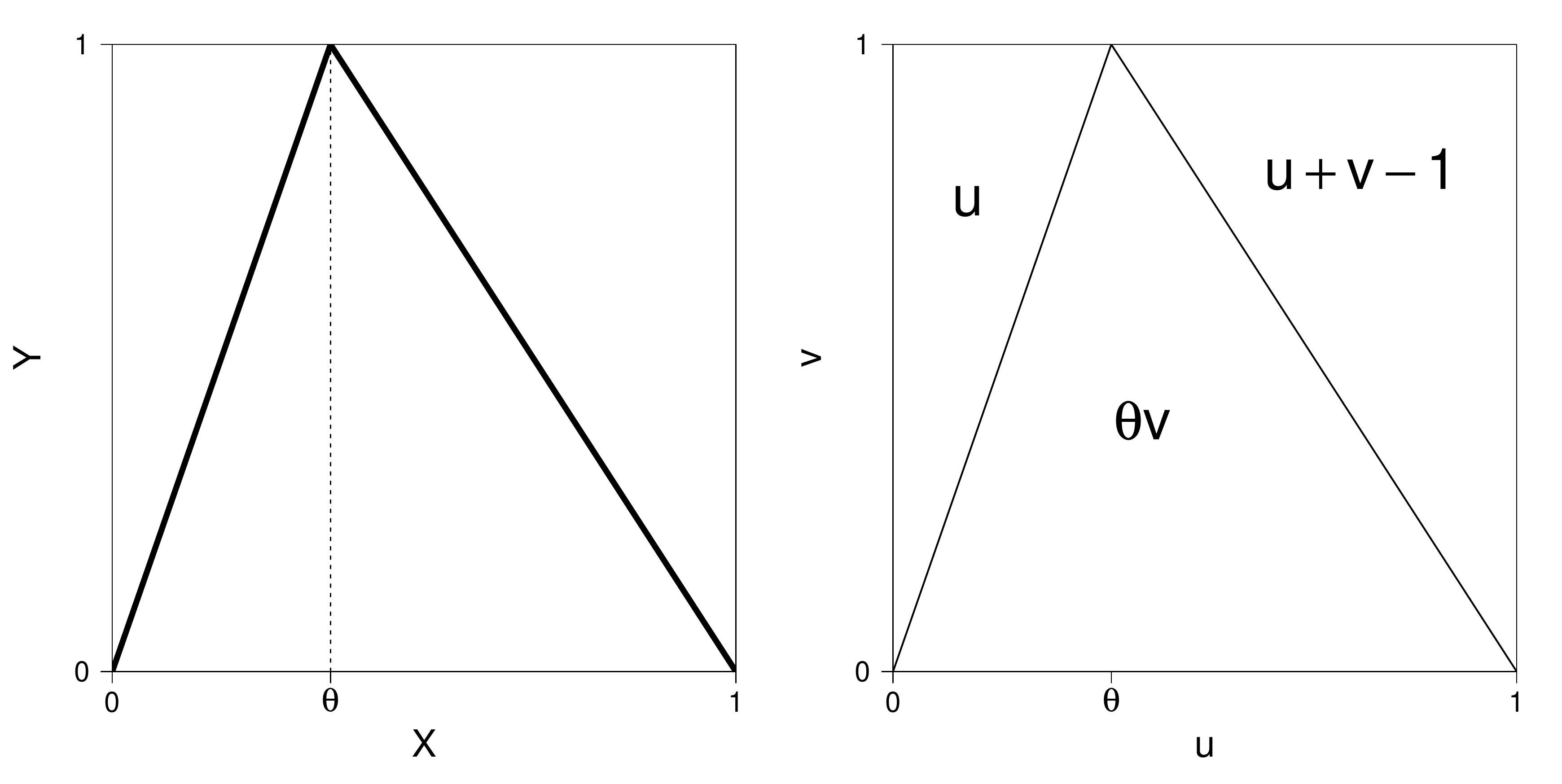}
  \caption{Example \ref{ex:NelsenEj33}. Left: $(X,Y)$ dependence. Right: underlying copula (\ref{eq:NelsenEj33A}).}
  \label{fig:Example1} 
  \end{figure}
	
	By construction we have that $\mathbb{P}(Y=\frac{x}{\theta}\,|\,X=x)=1$ whenever $0\leq x\leq\theta$ and $\mathbb{P}(Y=\frac{1-x}{1-\theta}\,|\,X=x)=1$ whenever $\theta\leq x\leq 1,$ which implies that the regression function of $Y$ given $X=x$ is
	\begin{equation}
	  \mu(x)\,=\,\begin{cases}
		                   \,\,\,\,\frac{x}{\theta}, & 0\leq x\leq\theta, \\
											 \,\frac{1-x}{1-\theta}, & \theta\leq x\leq 1,
		                 \end{cases}
	\label{eq:NelsenEj33B}
	\end{equation}
clearly a non-monotone function: linearly increasing for $0\leq x\leq\theta$ and linearly decreasing for $\theta\leq x\leq 1,$ which suggests in this case that the underlying dependence might be split by means of the gluing copula technique in terms of two copulas, with $\theta$ as gluing point. Indeed, let $C_1(u,v)=\min\{u,v\}$ (the Fr\'echet-Hoeffding upper bound that represents the case when one variable is almost surely an increasing function of the other) and $C_2(u,v)=\max\{u+v-1,0\}$ (the Fr\'echet-Hoeffding lower bound that represents the case when one variable is almost surely a decreasing function of the other), then applying (\ref{eq:gluing}) it is straightforward to verify that the resulting gluing copula $C_{1,2,\theta}$ is equal to (\ref{eq:NelsenEj33A}).
	
	\medskip
	
	Therefore, the same regression function obtained in (\ref{eq:NelsenEj33B}) could be obtained in two pieces: the first one in terms of the random vector $(X_1,Y)$ with underlying copula $C_1$ and where the distribution of $X_1$ is the conditional distribution of $X$ given $X\leq\theta,$ which turns to be uniform$(0,\theta),$ and the second one in terms of the random vector $(X_2,Y)$ with underlying copula $C_2$ and where the distribution of $X_2$ is the conditional distribution of $X$ given $X>\theta,$ which turns to be uniform$(\theta, 1).$ Applying (\ref{eq:condcdf}) to the first piece we obtain the following:
	\begin{eqnarray}
	  F_{Y|X_1}(y\,|\,x) &=& \frac{\partial}{\partial u}C_1(u,v)\Big|_{u\,=\,\frac{x}{\theta}\,,\,v\,=\,y} \nonumber \\
		                   &=& \begin{cases}
											       \,1, & \text{ if } y\geq\frac{x}{\theta}\,, \\
														 \,0, & \text{ if } y < \frac{x}{\theta}
											     \end{cases}
	\label{eq:NelsenEj33C}
	\end{eqnarray}
	from which we get $\mu_1(x)=\frac{x}{\theta}$ whenever $0\leq x\leq\theta,$ and similarly from $F_{Y|X_2}(y\,|\,x)$ we obtain $\mu_2(x)=\frac{1-x}{1-\theta}$ whenever $\theta\leq x\leq 1,$ as expected.$\qquad_{\blacksquare}$ 
\end{example}

For simplicity's sake, the case for a single break-point has been analyzed, but the analogous idea may be applied for finitely many break-points. For each interval $I$ induced in the support of the explanatory variable, the conditional distribution of $Y$ given $X=x$ is obtained by
\begin{equation}
  F_{Y|X}(y\,|\,x)\,=\,\frac{\partial}{\partial u}C_I(u,v)\Big|_{u\,=\,F_{X|X\in\,I}(x)\,,\,v\,=\,F_Y(y)}
\label{eq:condcdf4}
\end{equation}
and with it the regression function $\mu(x)$ for $x\in I$ may be calculated.

\section{Dependence and regression}
\label{sec:DepReg}

In this section the concepts of quadrant and regression dependence by \cite{Leh66} are recalled.

\begin{definition}\label{def:PQD}
  A bivariate random vector $(X,Y)$ or its joint distribution function $F_{X,Y}$ is \textit{positively quadrant dependent} and abbreviated as $\text{PQD}(X,Y)$ if
	\begin{equation}
	  \mathbb{P}(X\leq x, Y\leq y)\,\geq\,\mathbb{P}(X\leq x)\mathbb{P}(Y\leq y)\,,\qquad \text{for all }x \text{ and } y,
	\label{eq:PQD}
	\end{equation}
	and \textit{negatively quadrant dependent} $\text{NQD}(X,Y)$ if (\ref{eq:PQD}) holds with the inequality sign reversed.
\end{definition}

In the particular case where both $X$ and $Y$ are continuous random variables with underlying copula $C,$ as an immediate consequence of \textit{Sklar's Theorem} \cite{Skl59} we have that $\text{PQD}(X,Y)$ is equivalent to $C(u,v)\geq uv$ for all $u,v$ in $[0,1],$ and 
$\text{NQD}(X,Y)$ with this last inequality sign reversed. From \cite{Nel06} we have the following:

\begin{definition}\label{def:order}
  If $C_1$ and $C_2$ are copulas, we say that $C_1$ is \textit{smaller than} $C_2$ (or $C_2$ \textit{is larger than} $C_1$), and write $C_1\prec C_2$ (or $C_2\succ C_1$) if $C_1(u,v)\leq C_2(u,v)$ for all $u,v$ in $[0,1].$ 
\end{definition}

This point-wise partial ordering of the set of copulas is called \textit{concordance ordering.} It is a partial order rather than a total order because not every pair of copulas is comparable. However, there are families of copulas that are totally ordered. We will call a totally ordered parametric family $\{C_{\theta}\}$ of copulas \textit{positively ordered} if $C_{\alpha}\prec C_{\beta}$ whenever $\alpha\leq\beta;$ and \textit{negatively ordered} if $C_{\alpha}\succ C_{\beta}$ whenever $\alpha\leq\beta.$ Many of well known one-parameter families of copulas are totally ordered and include $\Pi(u,v)=uv,$ and hence have subfamilies of PQD and NQD copulas.

\medskip

As mentioned in \cite{Nel06} one form to calculate \textit{Spearman's concordance measure} is
\begin{equation}
  \rho_{\,C}\,=\,12\int\!\!\!\!\int_{[0,1]^2}\big[\,C(u,v)-uv\,\big]\,dudv\,=\,12\int\!\!\!\!\int_{[0,1]^2}C(u,v)\,dudv\,-\,3\,,
\label{eq:Spearman}
\end{equation}
and hence $\rho_{\,C}/12$ can be interpreted as a measure of ``average'' quadrant dependence (both positive and negative) for continuous random variables whose copula is $C.$ Closely related to (\ref{eq:Spearman}) is the $L_1$ distance between $C$ and the (sometimes called) independence copula $\Pi(u,v)=uv$ known as \textit{Schweizer-Wolff's dependence measure} \cite{SchWol81} defined as
\begin{equation}
  \sigma_{\,C}\,=\,12\int\!\!\!\!\int_{[0,1]^2}\big|C(u,v)-uv\big|\,dudv\,. 
\label{eq:Schweizer}
\end{equation}
Two main differences (among others) are that $-1\leq\rho_{\,C}\leq 1$ in contrast to $0\leq\sigma_{\,C}\leq 1,$ and that $\sigma_{\,C}=0$ if and only $X$ and $Y$ are independent (that is $C=\Pi$) while $\rho_{\,C}=0$ does not necessarily imply independence. Moreover, as explained in \cite{Nel06}:
\begin{quote}\label{qu:Nelsen}
  Of course, it is immediate that if $X$ and $Y$ are PQD, then $\sigma_{\,X,Y}=\rho_{\,X,Y}\,;$ and that if $X$ and $Y$ are NQD, then $\sigma_{\,X,Y}=-\rho_{\,X,Y}\,.$ Hence for many of the totally ordered families of copulas presented in earlier chapters (e.g., Plackett, Farlie-Gumbel-Morgenstern, and many families of Archimedean copulas), $\sigma_{\,X,Y}=|\rho_{\,X,Y}|.$ But for random variables $X$ and $Y$ that are neither PQD nor NQD, i.e., random variables whose copulas are neither larger nor smaller than $\Pi,$ $\sigma$ is often a better measure than $\rho$ [\ldots] 
\end{quote}

\begin{definition}\label{def:PRD}
  A random variable $Y$ is \textit{positively regression dependent} on a random variable $X$ and abbreviated as $\text{PRD}(Y|X)$ if
	\begin{equation}
	  F_{Y|X}(y\,|\,x)\,=\,\mathbb{P}(Y\leq y\,|\,X=x)\quad\text{is non-increasing in } x,
	\label{eq:PRD}
	\end{equation}
	and \textit{negatively regression dependent} $\text{NRD}(Y|X)$ if (\ref{eq:PRD}) is non-decreasing in $x.$
\end{definition}

From theorems 5.2.4 and 5.2.12 in \cite{Nel06} or from Lemma 4 in \cite{Leh66} we have the following:

\begin{corollary}
  Given $(X,Y)$ a bivariate random vector:
  \begin{itemize}
	  \item[a)] If $\,\text{PRD}(Y|X)\,$ then $\,\text{PQD}(X,Y).$
		\item[b)] If $\,\text{NRD}(Y|X)\,$ then $\,\text{NQD}(X,Y).$
	\end{itemize}
\label{cor:QDimpliesRD}
\end{corollary}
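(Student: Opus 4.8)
The plan is to prove each implication directly from the definitions, translating the regression-dependence hypothesis into a statement about the conditional distribution function and then integrating to recover a statement about the joint distribution. I would prove part (a) in detail and note that part (b) follows by an entirely symmetric argument (or by applying part (a) to $(-X,Y)$ or $(X,-Y)$, which reverses both the PRD/NRD and the PQD/NQD properties).

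\medskip

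\noindent\textbf{Proof of (a).} Assume $\text{PRD}(Y|X)$, so that by Definition~\ref{def:PRD} the map $x\mapsto F_{Y|X}(y\mid x)=\mathbb{P}(Y\leq y\mid X=x)$ is non-increasing in $x$ for every fixed $y$. The key step is to write $F_{Y}(y)$ as an average of the conditional distribution functions over the conditioning variable. Fixing $y$ and writing $g(x)=F_{Y|X}(y\mid x)$, the law of total probability gives
\begin{equation}
  F_{Y}(y)\,=\,\mathbb{P}(Y\leq y)\,=\,\int_{-\infty}^{\infty} g(t)\,dF_{X}(t)\,=\,\mathbb{E}[\,g(X)\,].
\label{eq:totalprob}
\end{equation}
Now fix an arbitrary $x$. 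Because $g$ is non-increasing, for every $t\leq x$ we have $g(t)\geq g(x)$, and for every $t>x$ we have $g(t)\leq g(x)$. I would split the integral in (\ref{eq:totalprob}) at the point $x$ and use these two monotonicity bounds on the two pieces.

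\medskip

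The main step is then the following comparison. Using that $g(t)\geq g(x)$ on $\{t\leq x\}$,
\begin{equation}
  \mathbb{P}(X\leq x,\,Y\leq y)\,=\,\int_{-\infty}^{x} g(t)\,dF_{X}(t)\,\geq\,g(x)\int_{-\infty}^{x} dF_{X}(t)\,=\,g(x)\,F_{X}(x).
\label{eq:lowertail}
\end{equation}
On the complementary set, using $g(t)\leq g(x)$ on $\{t>x\}$ gives
\begin{equation}
  \int_{x}^{\infty} g(t)\,dF_{X}(t)\,\leq\,g(x)\int_{x}^{\infty} dF_{X}(t)\,=\,g(x)\,\bigl(1-F_{X}(x)\bigr).
\label{eq:uppertail}
\end{equation}
Subtracting (\ref{eq:uppertail}) from (\ref{eq:totalprob}) and recalling $F_{Y}(y)=\mathbb{E}[g(X)]$, the upper-tail integral becomes $F_{Y}(y)-\mathbb{P}(X\leq x,Y\leq y)$, so that inequality (\ref{eq:uppertail}) reads $F_{Y}(y)-\mathbb{P}(X\leq x,Y\leq y)\leq g(x)\,(1-F_{X}(x))$. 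Combining this with (\ref{eq:lowertail}) lets me eliminate the unknown quantity $g(x)$: both displays bound $\mathbb{P}(X\leq x,Y\leq y)$ relative to $g(x)F_{X}(x)$, and adding $F_{X}(x)$ times the rearranged (\ref{eq:uppertail}) to $(1-F_{X}(x))$ times (\ref{eq:lowertail}) yields $\mathbb{P}(X\leq x,Y\leq y)\geq F_{X}(x)F_{Y}(y)$, which is exactly (\ref{eq:PQD}). Since $x$ and $y$ were arbitrary, $\text{PQD}(X,Y)$ follows, and (b) is obtained by reversing all inequalities.

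\medskip

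I expect the main obstacle to be the bookkeeping in the final elimination step: one must combine the two tail estimates with the correct weights so that the nuisance term $g(x)$ cancels and the product $F_{X}(x)F_{Y}(y)$ emerges cleanly. A secondary subtlety is the measure-theoretic justification of (\ref{eq:totalprob}) and of conditioning on the measure-zero event $\{X=x\}$, which is standard but should be acknowledged; working with a regular conditional distribution (guaranteed here since the variables are real-valued) makes the Stieltjes integrals in (\ref{eq:lowertail})--(\ref{eq:uppertail}) rigorous without any continuity assumption on the marginals.
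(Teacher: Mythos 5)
Your proof is correct, and it is worth noting that the paper itself does not prove this corollary at all: it is quoted directly from the literature (Theorems 5.2.4 and 5.2.12 in Nelsen's book, or Lemma 4 in Lehmann's 1966 paper). Your argument is a valid self-contained replacement, and the bookkeeping you were worried about does close: writing $A=\mathbb{P}(X\leq x, Y\leq y)$, $p=F_X(x)$, $q=F_Y(y)$ and $c=g(x)$, your two tail estimates read $A\geq cp$ and $A\geq q-c(1-p)$; multiplying the second by $p\geq 0$, the first by $1-p\geq 0$, and adding gives $A\geq pq$ with the nuisance term $c$ cancelling exactly, which is the PQD inequality, and part (b) does follow by reversing every inequality (the reduction via $(-X,Y)$ also works but needs a small limit argument to pass between $\{X<t\}$ and $\{X\leq t\}$, so the direct reversal is cleaner). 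Conceptually, what you have written is the classical argument behind Lehmann's lemma: since $\mathbb{P}(X\leq x, Y\leq y)=\mathbb{E}\big[\mathbf{1}_{\{X\leq x\}}\,g(X)\big]$ and $F_X(x)F_Y(y)=\mathbb{E}\big[\mathbf{1}_{\{X\leq x\}}\big]\,\mathbb{E}\big[g(X)\big]$, the claim is exactly the Chebyshev-type correlation inequality $\mathrm{Cov}\big(h(X),g(X)\big)\geq 0$ for two similarly ordered functions of the same random variable, specialized to $h=\mathbf{1}_{(-\infty,x]}$; your split of the integral at $x$ is an elementary in-line derivation of that inequality. What your route buys is self-containedness and generality: it uses no copula machinery and no continuity assumption on the marginals, only a regular conditional distribution (and, implicitly, a version of $F_{Y|X}$ that is non-increasing in $x$ everywhere rather than almost everywhere, which is how the paper's Definition 3 should be read). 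What the paper's citation buys is brevity and a direct link to the copula characterization of PRD via monotonicity of $\partial C(u,v)/\partial u$ that it exploits in later sections.
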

By arguments explained in \cite{Nel06} the reverse implications in Corollary \ref{cor:QDimpliesRD} do not necessarily hold. 

\begin{corollary}
  If $(X,Y)$  are continuous random variables with underlying copula $C$ then:
	\begin{itemize}
	  \item[a)] $\text{PRD}(Y|X)$ if and only if for any $v$ in $[0,1]$ and for almost all $u,$ $\partial C(u,v)/\partial u$ is non-increasing in $u;$
		\item[b)] $\text{NRD}(Y|X)$ if and only if for any $v$ in $[0,1]$ and for almost all $u,$ $\partial C(u,v)/\partial u$ is non-decreasing in $u.$
	\end{itemize}
\label{cor:NelsenThm5210}
\end{corollary}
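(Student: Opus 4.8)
The plan is to reduce everything to the identity (\ref{eq:condcdf}) and then transport monotonicity in $x$ to monotonicity in $u$ through the continuous marginals. I will sketch part a) in detail; part b) is identical after reversing the relevant inequalities, since by Definition \ref{def:PRD} the condition $\text{NRD}(Y|X)$ only asks $F_{Y|X}(y\,|\,x)$ to be non-decreasing rather than non-increasing in $x$.

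First I would recall that every copula is Lipschitz in its first argument, so that for each fixed $v$ the partial derivative $\partial C(u,v)/\partial u$ exists for almost all $u$; this is precisely why the qualifier \emph{for almost all} $u$ appears in the statement. By (\ref{eq:condcdf}), at every point where this derivative exists it furnishes a version of $F_{Y|X}(y\,|\,x)$ under the substitution $u=F_X(x)$ and $v=F_Y(y)$.

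Next I would set up the correspondence between the two parametrizations. Since $X$ is continuous, $F_X$ is continuous and non-decreasing with $\lim_{x\to-\infty}F_X=0$ and $\lim_{x\to+\infty}F_X=1$; hence $x_1\leq x_2$ forces $F_X(x_1)\leq F_X(x_2)$, and by the intermediate value theorem every $u$ in $(0,1)$ is attained, so any pair $u_1<u_2$ in $(0,1)$ can be written as $u_i=F_X(x_i)$ with $x_1<x_2$. The same holds for $v=F_Y(y)$, so that $v$ sweeps out all of $(0,1)$ as $y$ varies (the endpoints $v\in\{0,1\}$ are trivial, since there $\partial C(u,v)/\partial u$ is constant in $u$). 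Fixing $y$, equivalently $v$, and using this monotone change of variable, the assertion that $x\mapsto F_{Y|X}(y\,|\,x)$ is non-increasing becomes equivalent to the assertion that $u\mapsto\partial C(u,v)/\partial u$ is non-increasing for almost all $u$; the quantifier \emph{for any} $v$ corresponds to \emph{for all} $y$ exactly because $F_Y$ is onto $(0,1)$.

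I expect the delicate step to be the bookkeeping around the almost-everywhere clause: one must argue that monotonicity of the a.e.-defined function $u\mapsto\partial C(u,v)/\partial u$ is genuinely equivalent to monotonicity in $x$ of the everywhere-defined conditional distribution, comparing the derivative only at the full-measure set of points where it exists and invoking the continuity of $F_X$ to rule out that the exceptional null set could conceal a violation of monotonicity. Everything else is a direct translation through (\ref{eq:condcdf}).
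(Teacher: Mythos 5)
Your reduction is the right one, but note first that the paper itself contains no proof of this statement to compare against: as its label suggests, the corollary is imported from Nelsen's monograph (Theorem 5.2.10 in \cite{Nel06}), in the same way the preceding corollary is explicitly attributed to theorems 5.2.4 and 5.2.12 there. Measured against the standard proof of that theorem, your outline --- use (\ref{eq:condcdf}), let $v=F_Y(y)$ sweep $(0,1)$, dispose of $v\in\{0,1\}$ separately, and transport monotonicity in $x$ to monotonicity in $u$ through the continuous, surjective marginal $F_X$ --- is exactly the intended argument.

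The difficulty is that the step you explicitly defer (the ``bookkeeping around the almost-everywhere clause'') is the entire mathematical content of the equivalence, and ``invoking the continuity of $F_X$'' is not by itself the resolution; moreover, $F_{Y|X}(y\,|\,x)$ is not an ``everywhere-defined'' object but itself only a version, determined up to null sets of $x$ with respect to the law of $X$. Two distinct facts are needed and neither is supplied. For the necessity direction, one must know that exceptional $x$-sets of probability zero correspond to Lebesgue-null sets of $u$; this is precisely the statement that $F_X(X)$ is Uniform$(0,1)$ when $F_X$ is continuous, and it is what lets you compare $\partial C(u_1,v)/\partial u$ and $\partial C(u_2,v)/\partial u$ on a set of full measure, with the exceptional set allowed to depend on $v$. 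For the sufficiency direction, you must manufacture, from a derivative that exists only for almost all $u$ and is non-increasing where defined, a version of the conditional distribution that is non-increasing in $x$ for \emph{every} $x$ and is still a distribution function in $y$ for each fixed $x$; the clean device is that each section $u\mapsto C(u,v)$ is Lipschitz, hence absolutely continuous, so an a.e.\ non-increasing derivative forces $C(\cdot\,,v)$ to be concave, and then the one-sided derivative $\lim_{h\downarrow 0}\big[C(u+h,v)-C(u,v)\big]/h$ exists at every $u$, is non-increasing in $u$, and furnishes the required version. Without this (or an equivalent) bridge, the a.e.\ quantifier on the copula side and the version ambiguity on the conditional-distribution side cannot be matched, so as written the proposal is an outline of the correct proof rather than a proof.
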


In case the conditional expectation exists it is possible to obtain a \textit{mean regression function}
\begin{equation}
  \mu(x) \,=\, \mathbb{E}(Y\,|\,X=x) \,=\, \int_0^{\,\infty}[1-F_{Y|X}(y\,|\,x)]\,dy\,-\,\int_{-\infty}^{\,0} F_{Y|X}(y\,|\,x)\,dy\,,
\label{eq:regmedia}
\end{equation}
and in case $F_{Y|X}(y\,|\,x)$ is a continuous function of $y$ then it is possible to obtain a \textit{median regression function}
\begin{equation}
  \mu(x) \,=\,\text{median}(Y\,|\,X=x) \,=\, F_{Y|X}^{(-1)}(0.5\,|\,x)\,.
\label{eq:regmediana}
\end{equation}

\begin{proposition}
  Let $\mu(x)$ be a mean or median regression function:
  \begin{itemize}
	  \item[a)] If $\text{PRD}(Y|X)$ then $\mu(x)$ is non-decreasing.
		\item[b)] If $\text{NRD}(Y|X)$ then $\mu(x)$ is non-increasing.
	\end{itemize}
\label{prop:regresion}
\end{proposition}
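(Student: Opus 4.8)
\noindent\textit{Proof plan.} The plan is to recognize that Definition~\ref{def:PRD} is really a statement of first-order stochastic dominance among the conditional laws $\{F_{Y|X}(\cdot\,|\,x)\}_x$, and then to use that both the mean and the median are order-preserving functionals for that dominance. Fix $x_1\le x_2$. If $\text{PRD}(Y|X)$, then by (\ref{eq:PRD}) the map $x\mapsto F_{Y|X}(y\,|\,x)$ is non-increasing for each fixed $y$, so
\begin{equation}
  F_{Y|X}(y\,|\,x_1)\,\ge\,F_{Y|X}(y\,|\,x_2)\qquad\text{for all }y.
\label{eq:fosd}
\end{equation}
The whole argument then reduces to pushing (\ref{eq:fosd}) through each of the representations (\ref{eq:regmedia}) and (\ref{eq:regmediana}).

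For the mean regression function I would substitute (\ref{eq:fosd}) directly into (\ref{eq:regmedia}): on $[0,\infty)$ the integrand $1-F_{Y|X}(y\,|\,x)$ is non-decreasing in $x$, and on $(-\infty,0]$ the integrand $-F_{Y|X}(y\,|\,x)$ is non-decreasing in $x$ as well, so monotonicity of the integral gives $\mu(x_1)\le\mu(x_2)$. (Implicitly I assume the conditional mean exists, as stated, so the two improper integrals are finite and may be compared term by term.) For the median regression function I would invert (\ref{eq:fosd}), using that a pointwise inequality between distribution functions reverses upon passing to their generalized inverses, and conclude $\mu(x_1)=F_{Y|X}^{(-1)}(0.5\,|\,x_1)\le F_{Y|X}^{(-1)}(0.5\,|\,x_2)=\mu(x_2)$. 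Either way $\mu$ is non-decreasing, which is part (a); part (b) is identical with all inequalities reversed, since under $\text{NRD}(Y|X)$ the ordering in (\ref{eq:fosd}) flips.

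The one step that demands care is the median case, because $F_{Y|X}^{(-1)}(0.5\,|\,x)$ is a generalized (left-continuous) inverse, not a genuine inverse, whenever the conditional distribution has jumps or flat pieces. I would make the inversion rigorous through the equivalence $F^{(-1)}(p)\le t\iff p\le F(t)$, valid for right-continuous $F$. Writing $\mu(x_2)=F_{Y|X}^{(-1)}(0.5\,|\,x_2)$, the equivalence gives $0.5\le F_{Y|X}(\mu(x_2)\,|\,x_2)$, and combining with (\ref{eq:fosd}) yields
\begin{equation}
  0.5\,\le\,F_{Y|X}(\mu(x_2)\,|\,x_2)\,\le\,F_{Y|X}(\mu(x_2)\,|\,x_1),
\label{eq:medchain}
\end{equation}
whence the same equivalence applied in the other direction gives $F_{Y|X}^{(-1)}(0.5\,|\,x_1)\le\mu(x_2)$, i.e. $\mu(x_1)\le\mu(x_2)$, as required.
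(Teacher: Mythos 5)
Your proof is correct and follows essentially the same route as the paper's: both deduce from the PRD hypothesis the pointwise dominance $F_{Y|X}(y\,|\,x_1)\ge F_{Y|X}(y\,|\,x_2)$ for $x_1\le x_2$, compare means by integrating this inequality through the representation (\ref{eq:regmedia}), and compare medians by inverting it at level $0.5$. The only notable difference is that your median step, argued through the equivalence $F^{(-1)}(p)\le t \iff p\le F(t)$, is actually more careful than the paper's, which passes from dominance of the distribution functions to the reverse inequality between the quantile functions citing only their monotonicity in $u$.
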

\begin{proof}
  If $\text{PRD}(Y|X)$ then for all $x_1<x_2$
  \begin{eqnarray}
	  -F_{Y|X}(y\,|\,x_1) &\leq& -F_{Y|X}(y\,|\,x_2)\,, \label{eq:PRD2} \\
	  1 - F_{Y|X}(y\,|\,x_1) &\leq& 1 - F_{Y|X}(y\,|\,x_2)\,. \label{eq:PRD3}
	\end{eqnarray}
	Integration of (\ref{eq:PRD2}) on $\,]-\infty,0]$ and of (\ref{eq:PRD3}) on $[0,\infty[\,,$ and adding the results according to the inequalities it is obtained $\mu(x_1)=\mathbb{E}(Y\,|\,X=x_1)\leq\mathbb{E}(Y\,|\,X=x_2)=\mu(x_2),$ as required. Now from (\ref{eq:PRD2}) we have $F_{Y|X}(y\,|\,x_1)\geq F_{Y|X}(y\,|\,x_2),$ and since $F_{Y|X}(y\,|\,x)$ is non-decreasing in $y$ for any $x$ so is $F_{Y|X}^{(-1)}(u\,|\,x)$ as a function of $u$ and therefore $F_{Y|X}^{(-1)}(u\,|\,x_1)\leq F_{Y|X}^{(-1)}(u\,|\,x_2),$ hence $\mu(x_1)=\text{median}(Y\,|\,X=x_1)=\,$$F_{Y|X}^{(-1)}(0.5\,|\,x_1)\leq\,$$F_{Y|X}^{(-1)}(0.5\,|\,x_2)=\,$$\text{median}(Y\,|\,X=x_2)=\mu(x_2),$ as required.$\qquad_{\blacksquare}$ 
\end{proof}

But the reverse implications in this last proposition do not necessarily hold, as it can be easily verified by similar arguments.

\begin{example}\label{ex:NelsenEj33bis}
  Continuing with Example \ref{ex:NelsenEj33}, applying formulas (\ref{eq:Spearman}) and (\ref{eq:Schweizer}) it is straightforward to verify that Spearman's $\rho_{\theta}=2\theta-1$ and Schweizer-Wolff's $\sigma_{\theta}=\theta^2+(\theta-1)^2,$ and since $0<\theta<1$ then $|\rho_{\theta}|<\sigma_{\theta}$ and therefore neither we have PQD nor NQD, and neither PRD nor NRD. Moreover, if $\theta=\frac{1}{2}$ then $\rho_{1/2}=0$ but this does not imply independence since $\sigma_{1/2}=\frac{1}{2}$ (its minimum possible value, by the way). See Fig. \ref{fig:Examples23} (left).$\qquad_{\blacksquare}$
\end{example}

\section{Change-point detection}
\label{sec:Chngpt}

The ideas explained in the previous sections may be useful in tackling the concerns raised by \cite{DetHecVol14} when the dependence relationship between random variables implies a non-monotone regression function, considering that the most common families of parametric copulas lead to monotone regression functions, and a possible solution might be to break up such dependence into pieces such that within each one the dependence implies a piecewise monotone regression function, and possibly one of the common families of parametric copulas may have an acceptable fit for each piece. In pursuing this objective, when dealing with data from which the dependence has to be estimated, a methodology to find break-point candidates, that is \textit{change-point detection,} becomes necessary. 

\medskip

\begin{definition}\label{def:diagonal}
  The \textit{diagonal section} of a copula $C$ is a function $\delta_{\,C}:[0,1]\rightarrow[0,1]$ given by $\delta_{\,C}(t)=C(t,t).$
\end{definition}

Since every copula $C$ is bounded by the Fr\'echet-Hoeffding bounds $\max\{u+v-1,0\}\leq C(u,v)\leq\min\{u,v\}$ then $\max\{2t-1,0\}\leq\delta_{\,C}(t)\leq t.$ If $C=\Pi$ (independence copula) then $\delta_{\,\Pi}(t)=t^2.$ If $(X,Y)$ is a random vector of continuous random variables with underlying copula $C$ and $\text{PQD}(X,Y)$ or $\text{NQD}(X,Y)$ then $C(u,v)\geq uv$ or $C(u,v)\leq uv,$ respectively, for all $(u,v)$ in $[0,1]^2,$ and therefore $\delta_{\,C}(t)\geq t^2$ or $\delta_{\,C}(t)\leq t^2,$ respectively, for all $t$ in $[0,1].$ Hence, if there exist $t_1,t_2$ in $[0,1]$ such that $\delta_{\,C}(t_1)<t_1^2$ and $\delta_{\,C}(t_2)>t_2^2$ then neither $\text{PQD}(X,Y)$ nor $\text{NQD}(X,Y),$ and consequently this would imply that neither $\text{PRD}(Y|X)$ nor $\text{NRD}(Y|X).$ In case of this last scenario, this would not necessarily imply that a mean or median regression function $\mu(x)$ is non-monotone since Proposition \ref{prop:regresion} is a one-way implication, but at least raises the question and leads to propose and analyze break-point candidates. The following result is straightforward:

\begin{proposition}
  Let $C_1$ and $C_2$ be two copulas such that $C_1(u,v)\geq uv$ and $C_2(u,v)\leq uv$ for all $(u,v)\in[0,1]^2,$ and let $0<\theta<1.$ Then the diagonal section of the resulting gluing copula $C_{1,2,\theta}$ as in (\ref{eq:gluing}) satisfies
	\begin{equation}
	  \delta_{1,2,\theta}(t) \begin{cases}
		                         \,\geq t^2,  & \text{ if } \,0\leq t\leq\theta, \\
														 \,=\theta^2, & \text{ if } \,t = \theta, \\
														 \,\leq t^2,  & \text{ if } \,\theta\leq t\leq 1.
		                       \end{cases}
	\label{eq:gluingDiag}
	\end{equation}
\label{prop:gluingDiag}
\end{proposition}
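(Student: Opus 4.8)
The plan is to substitute $u=v=t$ directly into the piecewise definition (\ref{eq:gluing}) and, in each of the three regimes $t<\theta$, $t=\theta$, and $t>\theta$, transfer the hypothesis $C_1\succ\Pi$ (respectively $C_2\prec\Pi$) to the diagonal. The crucial structural feature is that the scaling factor $\theta$ in the first branch (and $1-\theta$ in the second) is exactly calibrated to cancel the rescaling of the first argument, so that the pointwise bound $C_1(u,v)\geq uv$ converts into $\delta_{1,2,\theta}(t)\geq t^2$ with no residual slack, and symmetrically for $C_2$.

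For $0\leq t\leq\theta$ I would use the first branch, $\delta_{1,2,\theta}(t)=\theta\,C_1(t/\theta,t)$, and apply $C_1(t/\theta,t)\geq (t/\theta)\cdot t=t^2/\theta$; multiplying by $\theta$ gives the first case. For $\theta\leq t\leq 1$ I would use the second branch, $\delta_{1,2,\theta}(t)=(1-\theta)\,C_2\!\big((t-\theta)/(1-\theta),t\big)+\theta t$, and apply $C_2\!\big((t-\theta)/(1-\theta),t\big)\leq t(t-\theta)/(1-\theta)$; upon substitution the factor $1-\theta$ cancels and the terms $-\theta t$ and $+\theta t$ annihilate, leaving exactly $\delta_{1,2,\theta}(t)\leq t^2$, the third case. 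At the boundary $t=\theta$ I would invoke the copula boundary conditions: the first branch yields $\theta\,C_1(1,\theta)=\theta^2$ because $C_1(1,v)=v$, and the second yields $(1-\theta)C_2(0,\theta)+\theta^2=\theta^2$ because $C_2(0,v)=0$, confirming both the equality and the continuous patching of the glued diagonal at $\theta$.

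There is no genuine obstacle here: the statement is an exact bookkeeping exercise once the correct branch is chosen. The only point requiring minor care is the algebraic cancellation in the $t\geq\theta$ case, where one must verify that the additive $\theta v$ term of the second branch combines with the scaled product precisely to recover $t^2$, rather than yielding an inequality with leftover slack.
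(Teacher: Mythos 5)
Your proof is correct, and it is exactly the ``straightforward'' verification the paper has in mind: the paper states this proposition without proof, and your direct substitution of $u=v=t$ into (\ref{eq:gluing}), combined with the hypotheses $C_1\succ\Pi$, $C_2\prec\Pi$ and the boundary conditions $C_1(1,v)=v$, $C_2(0,v)=0$, supplies precisely the omitted details, including the clean cancellation of the $\theta t$ terms in the second branch.
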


Since the diagonal section $\delta_{\,C}$ of any copula $C$ is a continuous function, see \cite{Nel06}, we may choose and analyze as possible break-point candidates those where crossings between $\delta_{\,C}$ and $\delta_{\,\Pi}$ take place.

\begin{example}\label{ex:NelsenEj33bis2}
  Continuing with Example \ref{ex:NelsenEj33}, from formula (\ref{eq:NelsenEj33A}) the corresponding diagonal section is:
	\begin{equation}\label{diagEjemplo}
	  \delta_{\theta}(t)\,=\,C_{\theta}(t,t)\,=\,\begin{cases}
		                                             \,\theta t\,, & t\leq\frac{1}{2-\theta}\,, \\
																								 \,2t-1\,, & t\geq\frac{1}{2-\theta}\,.
		                                           \end{cases}
	\end{equation}
	If $0<t\leq\frac{1}{2-\theta}$ then $\delta_{\theta}(t)\geq t^2$ if and only if $t\leq\theta.$ If $\frac{1}{2-\theta}\leq t<1$ then $\delta_{\theta}(t)\leq t^2.$ Since $0<\theta<1$ then $\theta<\frac{1}{2-\theta}$ and therefore we conclude that $\delta_{\theta}(t)\geq t^2$ if and only if $t\leq\theta,$ and $\delta_{\theta}(t)\leq t^2$ if and only if $t\geq\theta.$ Hence, we would propose $t=\theta$ as break-point candidate, as expected. See Fig. \ref{fig:Examples23} (right).$\qquad_{\blacksquare}$
\end{example}

\begin{figure}[t]
  \sidecaption
  \includegraphics[scale=.30]{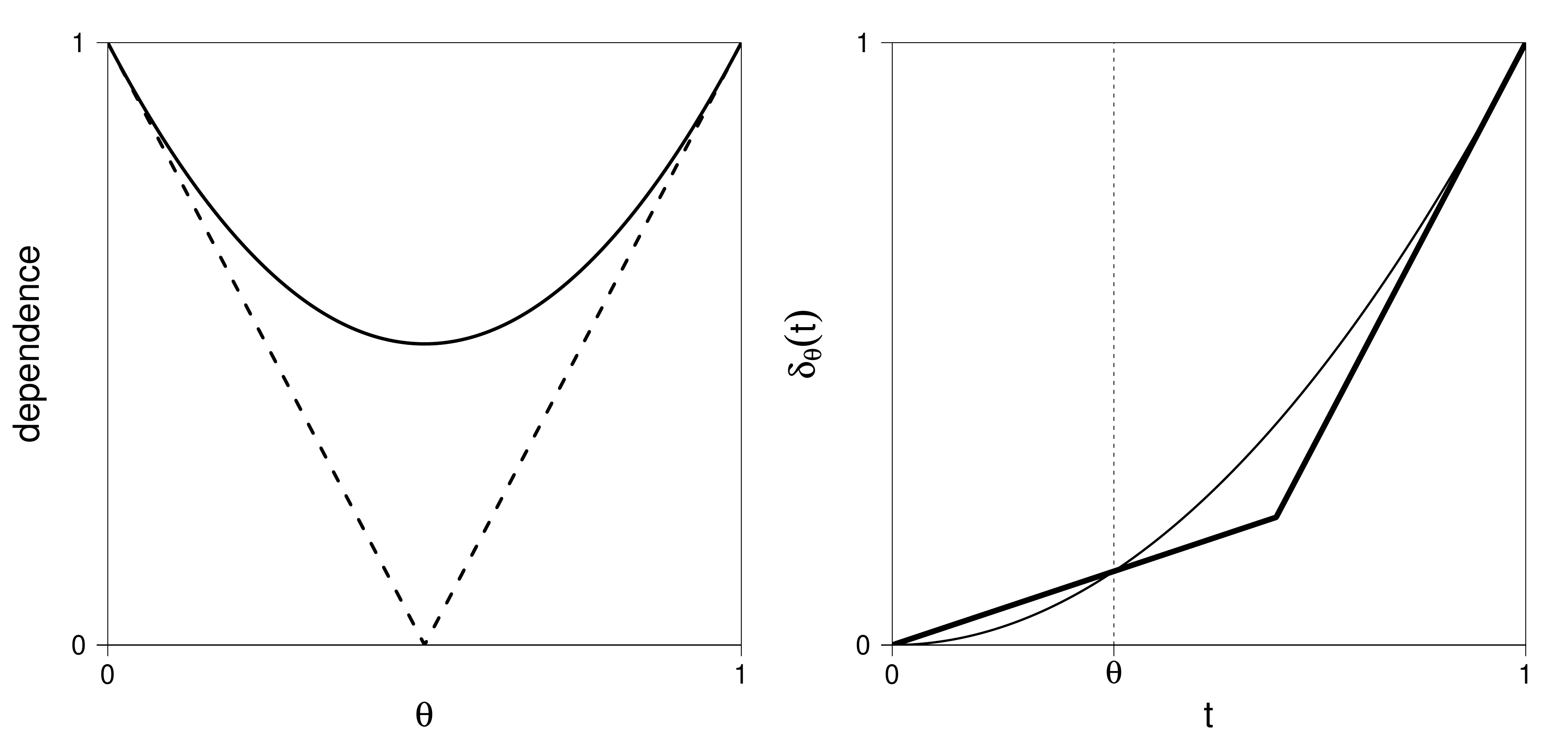}
  \caption{Left: $|\rho_{\theta}| $ (dashed line) and $\sigma_{\theta}$ (continuous line) in Example \ref{ex:NelsenEj33bis}. Right: $\delta_{\theta}$ (thick line) and $\delta_{\Pi}$ (thin line) in Example \ref{ex:NelsenEj33bis2}.}
  \label{fig:Examples23}
\end{figure}

\begin{example}\label{Dette}
  This is one of the examples used in \cite{DetHecVol14} to raise concerns about the use of copulas when the dependence relationship between random variables implies a non-monotone regression function. Let $\varepsilon$ be a Normal$(0,1)$ random variable, a constant $k^2 = 0.01,$ and $X$ a Uniform$(0,1)$ random variable independent from $\varepsilon.$ Now define the random variable:
	\begin{equation}\label{eq:Yreg}
	  Y\,=\,(X\,-\,0.5)^2\,+\,k\varepsilon\,.
	\end{equation}
	Then the conditional distribution of $Y$ given $X=x$ is Normal$\big((x-0.5)^2,k^2\big)$ and therefore the corresponding mean regression function is given by:
	\begin{equation}\label{eq:YregFunc}
	  \mu(x)\,=\,\mathbb{E}(Y\,|\,X=x)\,=\,(x\,-\,0.5)^2\,,\quad 0\leq x\leq 1,
	\end{equation}
clearly a non-monotone regression function (decreasing when $x\leq 0.5,$ increasing when $x\geq 0.5).$ Since the joint probability density of $(X,Y)$ is given by $f_{X,Y}(x,y)=f_X(x)f_{Y|X}(y\,|\,x)$ then:
	\begin{eqnarray}\label{eq:FXY}
	  F_{X,Y}(x,y) &=& \int_{-\infty}^{\,x}f_X(r)\int_{-\infty}^{\,y}f_{Y|X}(s\,|\,x)\,dsdr = \int_{-\infty}^{\,x}f_X(r)F_{Y|X}(y\,|\,r)\,dr \nonumber \\
		             &=& \begin{cases}
								        \,0\,, & \text{ if } x\leq 0, \\
								        \displaystyle{\int_0^{\,x}\Phi\Big(\frac{y-(r-0.5)^2}{k}\Big)\,dr\,,} & \text{ if } 0<x<1, \\
												\displaystyle{\int_0^1\Phi\Big(\frac{y-(r-0.5)^2}{k}\Big)\,dr\,,} & \text{ if } x\geq 1,
										 \end{cases}
	\end{eqnarray}
\noindent where $\Phi$ is the distribution function for a Normal$(0,1)$ random variable. From (\ref{eq:FXY}) it is possible obtain the following expression for the marginal distribution function of $Y:$
	\begin{equation}\label{eq:FY}
	  F_Y(y) \,=\, F_{X,Y}(+\infty,y) \,=\, \int_0^1\Phi\Big(\frac{y-(r-0.5)^2}{k}\Big)\,dr.
	\end{equation}
Hence, by Sklar's Corollary 2.3.7 in \cite{Nel06} it is possible to obtain the following expression for the underlying copula of $(X,Y):$
  \begin{equation}\label{eq:copula}
	  C(u,v) \,=\, F_{X,Y}\big(F_X^{(-1)}(u), F_Y^{(-1)}(v)\big) \,=\, \int_0^{\,u}\Phi\Big(\frac{F_Y^{-1}(v)-(r-0.5)^2}{k}\Big)\,dr\,,
	\end{equation}
and consequently the diagonal section of such copula is given by:
  \begin{equation}\label{eq:diagonal}
	  \delta_C(t) \,=\, C(t,t) \,=\, \int_0^{\,t}\Phi\Big(\frac{F_Y^{-1}(t)-(r-0.5)^2}{k}\Big)\,dr.
	\end{equation}

\begin{figure}[t]
  \sidecaption
  \includegraphics[scale=.30]{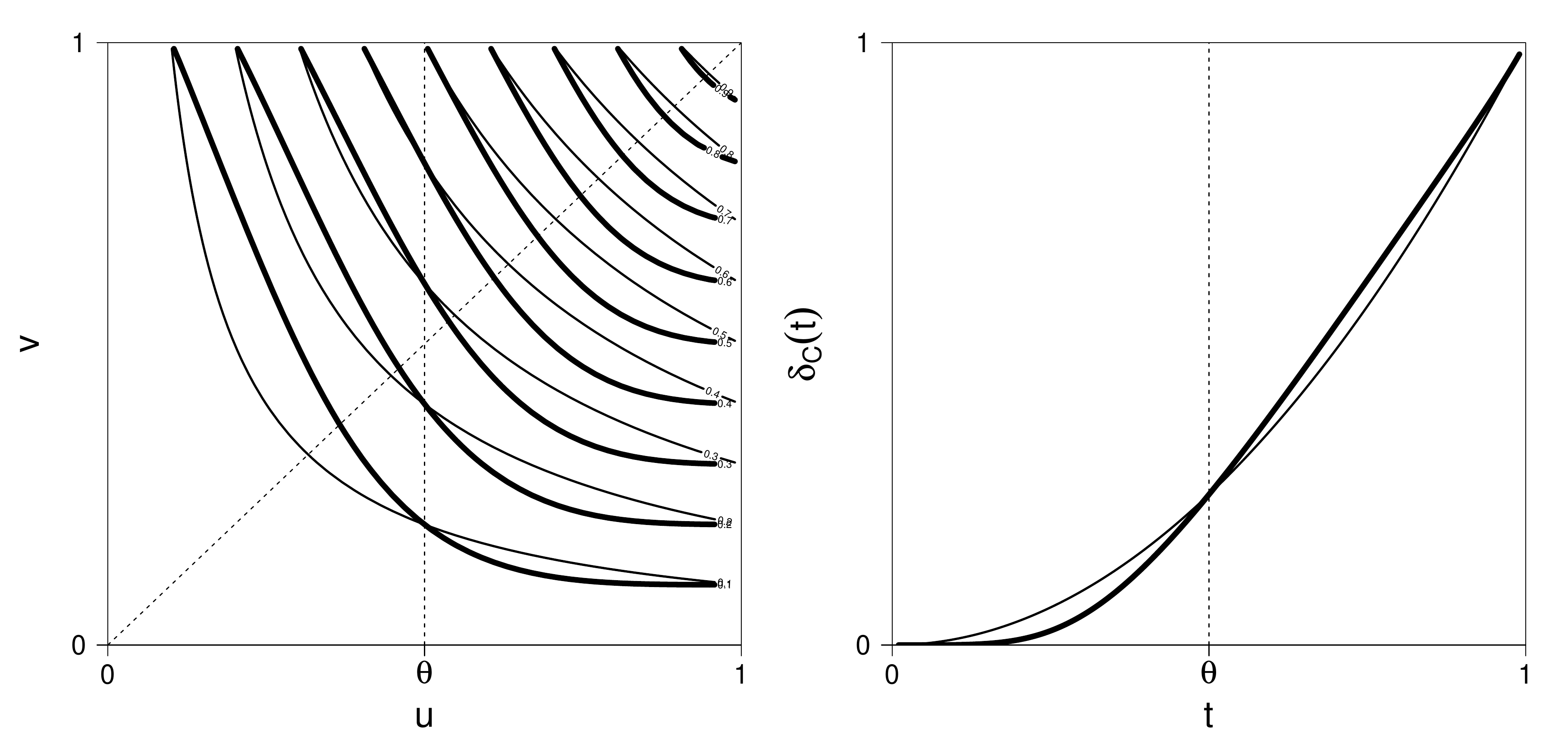}
  \caption{Example \ref{Dette}. Left: Level curves (thick style) of copula (\ref{eq:copula}) versus level curves (thin style) of product (or independence) copula. Right: Diagonal section (thick style) of copula (\ref{eq:copula}) versus diagonal section (thin style) of product (or independence) copula.}
  \label{fig:Example4A}
\end{figure}

\noindent In Fig. \ref{fig:Example4A} (left) we may notice crossings between copula (\ref{eq:copula}) level curves (thick style) and the product (or independence) copula $\Pi(u,v)=uv$ level curves (thin style), with the following interpretation: thick curve below thin curve implies $C(u,v)\geq\Pi(u,v)$ and thick curve above thin curve implies $C(u,v)\leq\Pi(u,v).$ In Fig. \ref{fig:Example4A} (right) the graph of the diagonal section (\ref{eq:diagonal}) is compared to the graph of the diagonal section of $\Pi$ from where we get as gluing point candidate $u=\theta=1/2.$

\medskip

\noindent Then we proceed to a \textit{gluing copula decomposition} by means of (\ref{eq:gluing}) where $C_{1,2,\theta}=C.$ For $0\leq u\leq\theta$ we get $\theta C_1(\frac{u}{\theta},v)=C(u,v),$ and if we let $u_*=\frac{u}{\theta}\in[0,1]$ then:
  \begin{equation}\label{eq:copula1}
    C_1(u_*,v) \,=\, \frac{1}{\theta}C(\theta u_*,v) \,=\, 2\int_0^{\,u_*/2}\Phi\Big(\frac{F_Y^{-1}(v)-(r-0.5)^2}{k}\Big)\,dr\,,
  \end{equation}
and therefore:
  \begin{equation}\label{eq:parcialcopula1}
	  \frac{\partial}{\partial u_*}C_1(u_*,v) \,=\, \Phi\Big(\frac{F_Y^{-1}(v)-0.25(1-u_*)^2}{k}\Big)\,,
	\end{equation}
where clearly (\ref{eq:parcialcopula1}) is a non-decreasing function of $u_*$ which by Corollary \ref{cor:NelsenThm5210} implies NRD for copula $C_1,$ and consequently NQD by Corollary \ref{cor:QDimpliesRD}. Also, by Proposition \ref{prop:regresion} we get that a regression function $\mu_1(x)$ based on $C_1$ will lead to a non-increasing function of $x.$ See Fig. \ref{fig:Example4B} (left) for the level curves of $C_1$ (thick style) versus the level curves (thin lines) of $\Pi(u,v)=uv,$ where all the level curves of $C_1$ are above the corresponding ones to $\Pi$ implying that $C_1(u,v)\leq\Pi(u,v),$ as expected.

\begin{figure}[t]
  \sidecaption
  \includegraphics[scale=.30]{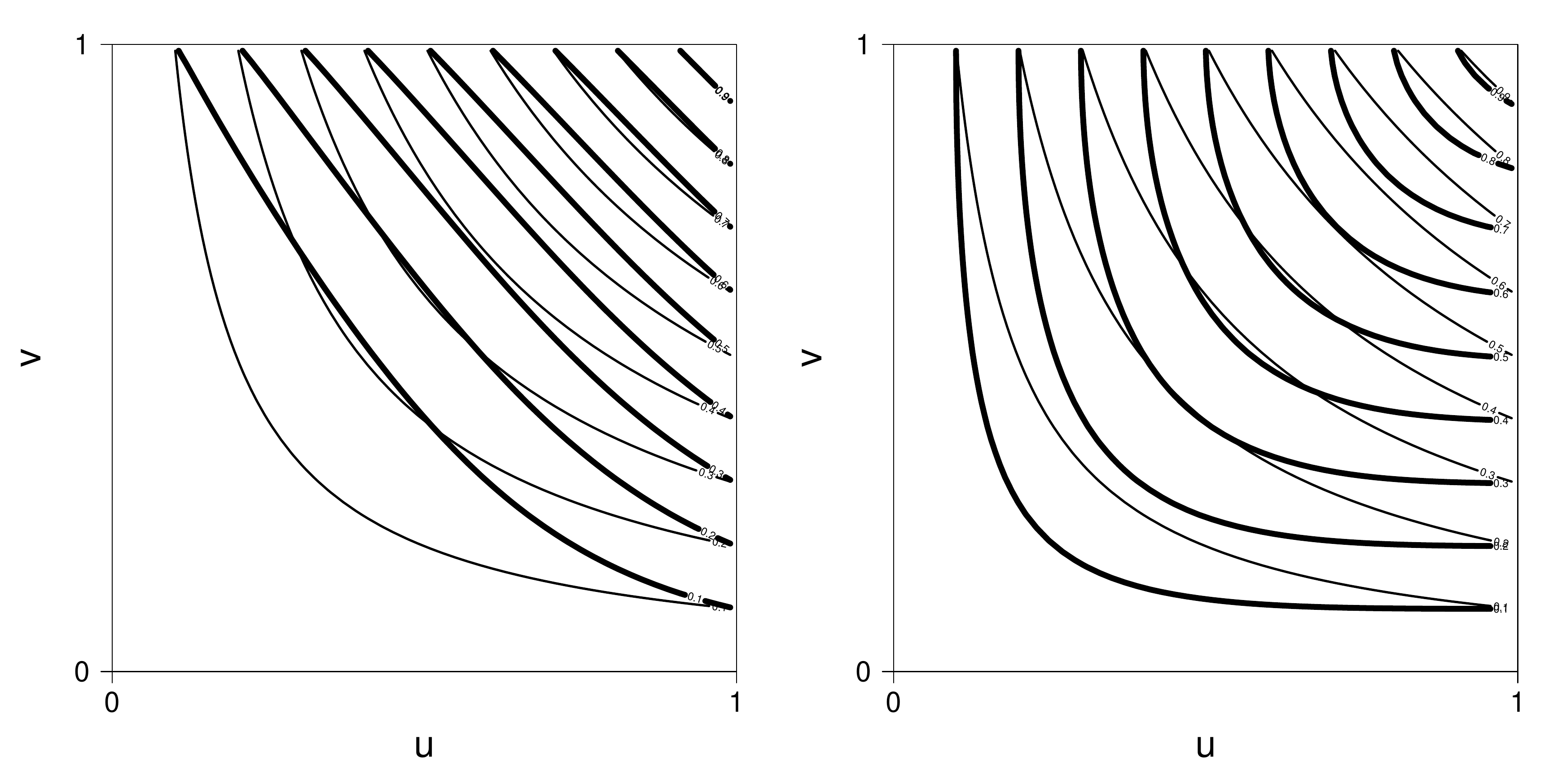}
\caption{Example \ref{Dette}. Left: Level curves (thick style) of copula (\ref{eq:copula1}) versus level curves (thin style) of product (or independence) copula. Right: Level curves (thick style) of copula (\ref{eq:copula2}) versus level curves (thin style) of product (or independence) copula.}
  \label{fig:Example4B}
\end{figure}

\medskip
\noindent Similarly, for $\theta\leq u\leq 1$ we get $(1-\theta)C_2(\frac{u-\theta}{1-\theta},v) + \theta v=C(u,v)$ and if we let $u_*=\frac{u-\theta}{1-\theta}\in[0,1]$ then:
  \begin{eqnarray}
    C_2(u_*,v) &=& \frac{C((1-\theta)u_*+\theta,v)-\theta v}{1-\theta} \nonumber \\
		           &=& 2\int_0^{(u_*+1)/2}\Phi\Big(\frac{F_Y^{-1}(v)-(r-0.5)^2}{k}\Big)\,dr\,-\,v\,,
  \end{eqnarray}\label{eq:copula2}
and therefore:
  \begin{equation}\label{eq:parcialcopula2}
	  \frac{\partial}{\partial u_*}C_2(u_*,v) \,=\, \Phi\Big(\frac{F_Y^{-1}(v)-0.25u_*^2)}{k}\Big)\,,
	\end{equation}
where clearly (\ref{eq:parcialcopula2}) is a non-increasing function of $u_*$ which by Corollary \ref{cor:NelsenThm5210} implies PRD for copula $C_2,$ and consequently PQD by Corollary \ref{cor:QDimpliesRD}. Also, by Proposition \ref{prop:regresion} we get that a regression function $\mu_2(x)$ based on $C_2$ will lead to a non-decreasing function of $x.$ See Fig. \ref{fig:Example4B} (right) for the level curves of $C_2$ (thick style) versus the level curves (thin lines) of $\Pi(u,v)=uv,$ where all the level curves of $C_2$ are below the corresponding ones to $\Pi$ implying that $C_2(u,v)\geq\Pi(u,v),$ as expected.

\medskip

\noindent In summary, the dependence between $X$ and $Y$ induced by (\ref{eq:Yreg}), which by construction has a regression function $\mu(x)$ that is non-monotone, has an underlying copula $C$ given by (\ref{eq:copula}) with a diagonal section $\delta_C$ given by (\ref{eq:diagonal}) that gives as gluing point candidate $\theta=1/2,$ leading to a gluing copula decomposition as in (\ref{eq:gluing}) where $C_1$ is NQD and NRD and therefore leads to a non-increasing regression function $\mu_1(x),$ and where $C_2$ is PQD and PRD and therefore leads to a non-decreasing regression function $\mu_2(x),$ that is:
  \begin{equation}\label{eq:nomonotreg}
	  \mu(x) \,=\, \begin{cases}
		               \,\mu_1(x)\,\downarrow\,, & u=F_X(x)=x\leq\theta=1/2\,, \\
									 \,\mu_2(x)\,\uparrow\,, & u=F_X(x)=x\geq\theta=1/2.
		             \end{cases}
	\end{equation}
In this example it was possible to obtain a gluing copula decomposition as in (\ref{eq:gluing}) of the underlying copula $C$ into $C_1$ and $C_2$ being these last two copulas NQD and PQD, respectively, and therefore candidates to be approximated by well known totally ordered families of copulas.$\qquad_{\blacksquare}$
\end{example}

\section{Final remarks}
\label{sec:Finrmks}

If $(X,Y)$ is a bivariate random vector of continuous random variables with an underlying copula $C$ such that $|\rho_C|<\sigma_C$ then $C$ is neither PQD nor NQD and therefore neither PRD nor NRD. Many of well known parametric families of copulas are totally ordered (that is, PQD and/or NQD) and in such case they have to be discarded as admissible copulas for $(X,Y).$ To face this challenge, in the present work it has been proposed a \textit{gluing copula decomposition} of $C$ into totally ordered copulas that combined may lead to a non-monotone regression function.

\begin{acknowledgement}
The present work was partially supported by project IN115817 from Programa de Apoyo a Proyectos de Investigaci\'on e Innovaci\'on Tecnol\'ogica (PAPIIT) at Universidad Nacional Aut\'onoma de M\'exico.
\end{acknowledgement}

\bibliographystyle{spmpsci}
\bibliography{ArturoErdely_biblio}

\end{document}